\newtheorem{theorem}{Theorem}
\newtheorem{lemma}{Lemma}
\newcommand{\hol}[3][1]{\chi_{#1} \left(#2, #3\right)}
\newcommand{\G}[2][H]{\mathcal G_{#2} \left(#1\right)}
\newcommand{\PH}[2][H]{\mathcal P_{#2} \left(#1\right)}
\begin{document}

\title{A non-injective version of Wigner's theorem}
\author{Mark Pankov, Lucijan Plevnik}
\subjclass[2000]{}
\keywords{projection, self-adjoint operator of finite rank, Wigner's theorem}
\address{University of Warmia and Mazury, Faculty of Mathematics and Computer Science,  S{\l}oneczna 54, Olsztyn, Poland}
\address{University of Ljubljana, Faculty for Mathematics and Physics, Jadranska 21, Ljubljana, Slovenia}
\email{pankov@matman.uwm.edu.pl, lucijan.plevnik@fmf.uni-lj.si}
\maketitle

\begin{abstract}
Let $H$ be a complex Hilbert space and 
let ${\mathcal F}_{s}(H)$ be the real vector space of all self-adjoint finite rank operators on $H$.
We prove the following non-injective version of Wigner's theorem: 
every linear operator on ${\mathcal F}_{s}(H)$ sending rank one projections to rank one projections
(without any additional assumption) is either induced by a linear or conjugate-linear isometry or 
constant on the set of rank one projections.
\end{abstract}

\section{Introduction}
Wigner's theorem plays an important role in mathematical foundations of quantum mechanics.
Pure states of a quantum mechanical system are identified with rank one projections (see, for example, \cite{Var})
and Wigner's theorem \cite{Wigner} characterizes all symmetries of the space of pure states as unitary and anti-unitary operators.
We present a non-injective version of this result in terms of linear operators on 
the real vector space of self-adjoint finite rank operators which send rank one projections to rank  one projections.

Let $H$ be a complex Hilbert space.
For every natural $k<\dim H$ we denote by ${\mathcal P}_{k}(H)$ the set of all rank $k$ projections,
i.e. bounded self-adjoint idempotent operators of rank $k$.
Let ${\mathcal F}_{s}(H)$ be the real vector space of all self-adjoint finite rank operators on $H$.
This vector space is spanned by ${\mathcal P}_{k}(H)$, see e.g. \cite[Lemma 2.1.5]{Molnar-book}.

Classical Wigner's theorem says that 
every bijective transformation of ${\mathcal P}_{1}(H)$ preserving the angle between the images of any two projections,
or equivalently, preserving the trace of the composition of any two projections, is induced by a unitary or anti-unitary operator. 
The first rigorous proof of this statement was given in \cite{LM}, see also \cite{Uh} for the case when $\dim H\ge 3$.
By the non-bijective version of this result \cite{Bar, Barv, Geher1}, 
arbitrary (not necessarily bijective) transformation of ${\mathcal P}_{1}(H)$ preserving the angles between the images of projections
(it is clear that such a transformation is injective) is induced by a linear or conjugate-linear isometry.

Various analogues of Wigner's theorem for ${\mathcal P}_{k}(H)$ can be found in 
\cite{Geher,GeherSemrl,Gyory,Molnar1,Molnar-book,Molnar2,Pankov1,Pankov2,Pankov-book-w,Semrl}.
In particular, transformations of ${\mathcal P}_{k}(H)$ preserving principal angles between the images of any two projections 
and transformations preserving the trace of the composition of  any two projections 
%(these conditions are not equivalent if $k\ne 1,\dim H-1$)
are determined  in \cite{Molnar1,Molnar2} and \cite{Geher}, respectively.
All such transformations are induced by linear or conjugate-linear isometries, 
except in the case $\dim H=2k\ge 4$ when there is an additional class of transformations. 
The description of transformations preserving the trace of the composition given in \cite{Geher} is based on the following fact from \cite{Molnar1}:
every transformation of ${\mathcal P}_{k}(H)$ preserving the trace of the composition of  two projections can be extended to 
an injective linear operator on ${\mathcal F}_{s}(H)$.
So, there is an intimate relation between Wigner's type theorems  mentioned above and 
results concerning linear operators sending projections to projections \cite{ACh, Pankov3,SChM,Stormer}.

Consider a linear operator $L$ on ${\mathcal F}_{s}(H)$ such that
\begin{equation}\label{eqk}
L({\mathcal P}_{k}(H))\subset {\mathcal P}_{k}(H)
\end{equation}
%for a certain natural $k$
such that the restriction of $L$ to ${\mathcal P}_{k}(H)$ is injective.
We also assume that $\dim H\ge 3$.
By \cite{Pankov3}, this operator is induced by a linear or conjugate-linear isometry if $\dim H\ne 2k$. 
In the case when $\dim H=2k$, it can be also a composition of an operator induced by a linear or conjugate-linear isometry and 
an operator which sends any projection on a $k$-dimensional subspace $X$ to the projection on the orthogonal complement $X^{\perp}$.
This statement is a small generalization of the result obtained in \cite{ACh}.
The main result of \cite{Pankov3} concerns linear operators sending ${\mathcal P}_{k}(H)$ to ${\mathcal P}_{m}(H)$,
as above, whose restrictions to ${\mathcal P}_{k}(H)$ are injective.

In this paper, we determine all possibilities for a linear operator $L$ on ${\mathcal F}_{s}(H)$ satisfying \eqref{eqk}
for $k=1$ without any additional assumption.
Such an operator is either induced by a linear or conjugate-linear isometry or its restriction to ${\mathcal P}_{1}(H)$ is constant. We mention that this result could be easily obtained from \cite[Theorem 2.1]{Semrl-adj}, as such a map $L$ clearly preserves the adjacency relation on the set ${\mathcal F}_{s}(H)$. However, we will present an elementary approach by only using the Wigner's theorem.

Some remarks concerning the case when $k>1$ will be given in the last section.

\section{The main result}
We investigate linear maps on ${\mathcal F}_{s}(H)$ preserving the set of projections of rank one. Our  main result is the following.

 \begin{theorem}\label{main}
 	Let $H$ be a complex Hilbert space, $\dim H \ge 2$, and $L \colon {\mathcal F}_{s}(H) \to {\mathcal F}_{s}(H)$ a linear map. Then we have
 	\begin{equation}\label{lpro}
 			L \left(\PH{1}\right) \subset \PH{1}
 		\end{equation}
 	if and only if either there exists $P_0 \in \PH{1}$ such that
 	\begin{eqnarray*}
 			L (A) = ({\rm tr} A) P_0, & A \in {\mathcal F}_{s}(H)
 		\end{eqnarray*}
 	or there exists a linear or conjugate-linear isometry $U \colon H \to H$ such that
 	\begin{eqnarray*}
 			L (A) = UAU^\ast, & A \in {\mathcal F}_{s}(H).
 		\end{eqnarray*}
 \end{theorem}
 
\section{Preliminaries}
Denote by $P_{X}$ the projection whose image is  a closed subspace $X\subset H$.
Since $P_{X}$ belongs to ${\mathcal P}_{k}(H)$ if and only if $X$ is $k$-dimensional,
${\mathcal P}_{k}(H)$ will be identified with the Grassmannian ${\mathcal G}_{k}(H)$. For any subspace $Z \subset H$, denote
$$
\left\langle Z \right]_1 = \{ X \in \G 1 \mid X \subset Z \}.
$$
If $\dim H \ge 2$, then ${\mathcal G}_{1}(H)$ is a projective space, whose projective lines are exactly sets of the form $\left\langle S \right]_1$, $S\in{\mathcal G}_{2}(H)$.

We will show that the maps $f$, satisfying \eqref{lpro}, behave nicely on projective lines in ${\mathcal G}_{1}(H)$. In order to do that, we will need the following concept, which is a modification of the concept, introduced in \cite{Geher}. For any $X, Y \in \G 1$ and $t \in \left(\tfrac{1}{2}, \infty\right)$, define the set
$$
\hol[t]{X}{Y} = \left\{ Z \in \G 1 \, : \, t\left(P_X + P_Y\right) + (1-2t) P_Z \in \PH 1 \right\}.
$$
The following lemma describes this set.

\begin{lemma}\label{hole}
Let $X, Y \in {\mathcal G}_{1}(H)$ and $t \in \left(\tfrac{1}{2}, \infty\right)$. Then
the following statements hold.
\begin{itemize}
	\item $\hol[t]{X}{Y}\subset \left\langle X+Y \right]_1$
	\item $\hol[t]{X}{Y} \ne \emptyset \iff {\rm tr}\, (P_XP_Y) \ge \left(1-\tfrac{1}{t}\right)^2$
	\item If $X$ and $Y$ are orthogonal, then $\hol[1]{X}{Y} = \left\langle X+Y\right]_1$.
	\item If $X \ne Y$ and ${\rm tr}\, (P_XP_Y) > \left(1-\tfrac{1}{t}\right)^2$, then $\hol[t]{X}{Y}$
%	is a proper subset of $\left\langle X+Y\right]_1$, which
	is homeomorphic to a circle.
	\item If $X = Y$ or ${\rm tr}\, (P_XP_Y) = \left(1-\tfrac{1}{t}\right)^2 \ne 0$, then $\hol[t]{X}{Y}$ is a singleton.
\end{itemize}
\end{lemma}

\begin{proof}
	It is easy to show that $\hol[t]{X}{X} = \{X\}$.
	
Assume now that $X \ne Y$ and denote $S = X+Y$ and $A = P_X+P_Y$.
Then $A$ is a positive semidefinite operator with trace $2$. Its kernel equals $X^\bot \cap Y^\bot$, so its range equals $S$. Therefore, if $Z \in \hol[t]{X}{Y}$, then $t A + (1-2t) P_Z$ is positive semidefinite, implying that $Z \in \left\langle S \right]_1$.

Moreover, there exist $c \in [0,1)$ and an orthonormal base $\mathcal B$ of $S$, according to which we have the matrix representation
$$
A|_S = \left[\begin{array}{cc}
	1 + c & 0 \\
	0 & 1 - c
\end{array}\right].
$$
Note that
$$
{\rm tr}\, (P_XP_Y) = \tfrac{1}{2}{\rm tr}\, (A^2-A) = c^2.
$$
If $Z$ is any element of $\left.\langle S\right]_1$, then, according to $\mathcal B$,
\begin{eqnarray*}
	P_Z|_S = \left[\begin{array}{cc}
		s & w \\
		\overline w & 1-s
	\end{array}\right]
\end{eqnarray*}
for some $s \in [0,1]$ and $w \in \mathbb C$, $|w| = \sqrt{s(1-s)}$. Any such $Z$ belongs to $\hol[t]{X}{Y}$ if and only if
$$
\det \left(t \left[\begin{array}{cc}
	1 + c & 0 \\
	0 & 1 - c
\end{array}\right] + (1-2t)\left[\begin{array}{cc}
s & w \\
\overline w & 1-s
\end{array}\right]\right) = 0.
$$
A straightforward calculation shows that the latter holds if and only if we have either $c = 0$ and $t = 1$ or $c \ne 0$ and $s$ equals
\begin{equation}\label{ct}
	\tfrac{(1 + c) (t(1 + c)-1)}{2c (2t-1)}.
\end{equation}
Thus, if $X$ and $Y$ are orthogonal, then $\hol[t]{X}{Y}$ is non-empty if and only if $t = 1$ and in this case, it equals $\left\langle X+Y\right]_1$. In the case when they are not orthogonal, $\hol[t]{X}{Y}$ is non-empty if and only if \eqref{ct} belongs to $[0,1]$, which is equivalent to $c \ge \left|1-\tfrac{1}{t}\right|$. Next, if \eqref{ct} belongs to $\{0,1\}$, which is equivalent to $c = \left|1-\tfrac{1}{t}\right|$, then $\hol[t]{X}{Y}$ is a singleton. Finally, if \eqref{ct} belongs to $(0,1)$ and equals $s$, then any $Z \in \hol[t]{X}{Y}$ can be identified with an element $w$ of the circle with origin $0$ and radius $\sqrt{s(1-s)}$.
\end{proof}

\section{Proof of Theorem 1}
Recall that $L$ is a linear map ${\mathcal F}_{s}(H) \to {\mathcal F}_{s}(H)$ satisfying \eqref{lpro}. Denote by $f$ the transformation ${\mathcal G}_{1}(H) \to {\mathcal G}_{1}(H)$, induced by $L$, i.e.
$L(P_{X})=P_{f(X)}$, $X\in {\mathcal G}_{1}(H)$.

\begin{lemma}\label{lemma1}
The following assertions are fulfilled:
\begin{enumerate}
\item For any $t \in \mathbb R \setminus \left\{0, \tfrac{1}{2}\right\}$ and $X,Y\in {\mathcal G}_{1}(H)$ we have $$f(\hol[t]{X}{Y})\subset \hol[t]{f(X)}{f(Y)}.$$
If $f(X)=f(Y)$, then $f$ is constant on $\hol[t]{X}{Y}$.
\item $f$ transfers any projective line to a subset of a projective line.
\end{enumerate}
\end{lemma}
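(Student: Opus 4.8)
The plan is to prove part (1) first, since part (2) will follow from it together with the structure of small circles. For part (1), fix $X,Y\in{\mathcal G}_{1}(H)$ and take $Z\in\hol{X}{Y}$; by definition $P_{X}+P_{Y}-P_{Z}\in{\mathcal P}_{1}(H)$. Applying the linear operator $L$ and using $L({\mathcal P}_{1}(H))\subset{\mathcal P}_{1}(H)$ on each term, we get $P_{f(X)}+P_{f(Y)}-P_{f(Z)}=L(P_{X}+P_{Y}-P_{Z})\in{\mathcal P}_{1}(H)$, which says exactly that $f(Z)\in\hol{f(X)}{f(Y)}$. For the second claim, suppose $f(X)=f(Y)$. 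Then for every $Z\in\hol{X}{Y}$ the operator $P_{f(X)}+P_{f(Y)}-P_{f(Z)}=2P_{f(X)}-P_{f(Z)}$ is a rank one projection; comparing traces gives $1=2-1$, which is consistent, but comparing the operators themselves: a rank one projection $R$ with $2P_{f(X)}-R$ also a rank one projection forces, after a short computation in the (at most two-dimensional) span of the images, that $R=P_{f(X)}$. Concretely, writing $R=P_{f(X)}$ is equivalent to $2P_{f(X)}-R=P_{f(X)}$; in general if $S$ and $2P-S$ are both rank one projections with $P$ rank one, then $S$ and $2P-S$ have the same trace $1$, their sum is $2P$, so both have image inside the image of $P$ (since $P\ge \tfrac12 S\ge 0$ forces $\mathrm{im}\,S\subset\mathrm{im}\,P$), hence $S=P$. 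Thus $f(Z)=f(X)$ for all $Z\in\hol{X}{Y}$, i.e. $f$ is constant there.

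For part (2), let $\ell$ be a projective line in ${\mathcal G}_{1}(H)$, i.e. ${\mathcal G}_{1}(V)$ for a two-dimensional subspace $V\subset H$. If $f$ is constant on $\ell$ there is nothing to prove, so assume $f(X_{0})\ne f(Y_{0})$ for some $X_{0},Y_{0}\in\ell$. I claim $f(\ell)$ lies in the projective line $\ell'$ determined by $f(X_{0})+f(Y_{0})$, namely ${\mathcal G}_{1}(W)$ where $W=f(X_{0})+f(Y_{0})$ is two-dimensional. Take any $X\in\ell$. If $X$ is orthogonal to, say, $X_{0}$ inside $V$, then $\hol{X}{X_{0}}={\mathcal G}_{1}(V)=\ell$ contains $Y_{0}$, so by part (1) $f(Y_{0})\in\hol{f(X)}{f(X_{0})}\subset{\mathcal G}_{1}(f(X)+f(X_{0}))$; since also $f(X_{0})$ lies there and $f(Y_{0})\ne f(X_{0})$ (or handle the degenerate subcase directly), this pins $f(X)+f(X_{0})=f(X_{0})+f(Y_{0})=W$, giving $f(X)\in{\mathcal G}_{1}(W)=\ell'$. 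For a general $X\in\ell$, one may pick an auxiliary point $X'\in\ell$ orthogonal to $X$ within $V$; then $f(X),f(X')\in\ell'$ provided the previous paragraph applies to the relevant pairs, and a short case analysis (using that $\ell$ has ``enough'' points and $\hol{\cdot}{\cdot}$ on ${\mathcal G}_{1}(V)$ is governed by Lemma \ref{hole}) closes the argument. The cleanest route is: show that any three points of $f(\ell)$ lie on a common projective line — if two of them coincide this is immediate, and if all three are distinct, realize each as an element of some $\hol{A}{B}$ with $A,B\in\ell$ and invoke part (1) to place the image inside a single ${\mathcal G}_{1}(f(A)+f(B))$.

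The main obstacle is part (2), specifically the bookkeeping needed to show that all images land in the \emph{same} projective line rather than merely pairwise-collinear ones. The subtlety is that $\hol{f(X)}{f(Y)}$ only constrains $f(Z)$ to lie in the plane $f(X)+f(Y)$, and different pairs $X,Y\in\ell$ could a priori yield different planes; one must use that $\ell$ is one-dimensional as a projective space, so the various planes $f(X)+f(Y)$ are forced to agree as soon as the restriction of $f$ to $\ell$ is non-constant. I expect the orthogonality trick above — choosing within $V$ a partner orthogonal to a given point, which makes $\hol{\cdot}{\cdot}$ equal to all of $\ell$ by the remark after the definition of $\hol{k}{\cdot}$ — to be the device that removes this ambiguity, together with the elementary fact that two rank one projections with non-orthogonal distinct images span a unique plane containing both. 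Once part (2) is in hand, Theorem 1 will follow by analyzing $f$ on each projective line via the sphere picture of Section 3.
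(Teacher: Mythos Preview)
Your argument for part~(1) is correct and is exactly the ``easy verification'' the paper has in mind; for the second claim you could also just quote that $\hol{Z}{Z}=\{Z\}$, so $\hol{f(X)}{f(Y)}=\hol{f(X)}{f(X)}=\{f(X)\}$.

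For part~(2), however, you have made the problem look much harder than it is, and in fact you already finish the proof in your first step without noticing. The paper's entire argument is: pick any orthogonal pair $X,Y\in{\mathcal G}_{1}(S)$; then $\hol{X}{Y}={\mathcal G}_{1}(S)$, so part~(1) gives
\[
f({\mathcal G}_{1}(S))\subset \hol{f(X)}{f(Y)}\subset {\mathcal G}_{1}(f(X)+f(Y)),
\]
which is a projective line (or take any line through $f(X)$ if $f(X)=f(Y)$). That is the whole proof. You essentially carry out this computation when you take ``$X$ orthogonal to $X_{0}$,'' obtaining $f(\ell)\subset \hol{f(X)}{f(X_{0})}\subset {\mathcal G}_{1}(f(X)+f(X_{0}))$ --- but then you discard this conclusion and proceed as if you had only located the single image $f(X)$, launching into a case analysis for ``general $X\in\ell$'' and a discussion of the ``main obstacle.'' There is no obstacle: the containment you derived already places \emph{all} of $f(\ell)$ inside one projective line. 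The later material about three points being pairwise collinear, auxiliary orthogonal partners, and Lemma~\ref{hole} is unnecessary.
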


\begin{proof}
	\begin{enumerate}
		\item Easy verification.
		\item If $S\in{\mathcal G}_{2}(H)$ and $X,Y\in \left\langle S \right]_1$ are orthogonal, 
		then $\hol{X}{Y}$ coincides with $\left\langle S \right]_1$ by Lemma \ref{hole} and we have 
		$$f(\left\langle S \right]_1)\subset \hol{f(X)}{f(Y)}\subset \left\langle S' \right]_1$$
		with $S'=f(X)+f(Y)$ if $f(X)\ne f(Y)$, otherwise we take any $2$-dimensional subspace $S'$ containing $f(X)=f(Y)$.
	\end{enumerate}
\end{proof}

\begin{lemma}\label{const}
The restriction of $f$ to any projective line is either injective or constant.
\end{lemma}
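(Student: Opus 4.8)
The plan is to suppose that the restriction of $f$ to a projective line $\mathcal G_1(S)$, $S\in\mathcal G_2(H)$, is not injective and to deduce that it is constant. So assume $f(A)=f(B)=:T$ for distinct $A,B\in\mathcal G_1(S)$. By Lemma \ref{lemma1}(1), $f$ is constant on $\hol{A}{B}$, necessarily with value $f(A)=T$. If $A$ and $B$ are orthogonal, then $\hol{A}{B}=\mathcal G_1(A+B)=\mathcal G_1(S)$ (the inverse inclusion recorded in Section 3 holds precisely for orthogonal subspaces), so $f$ is already constant on the whole line. Hence from now on $A,B$ are non-orthogonal, $C:=\hol{A}{B}$ is a small circle, and $f\equiv T$ on $C$.

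Set $\Omega:=\{X\in\mathcal G_1(S):f(X)=T\}$, so $C\subset\Omega$. The basic observation is that $\Omega$ is closed under the operation $(X,Y)\mapsto\hol{X}{Y}$: if $X,Y\in\Omega$ are distinct, then $f(X)=f(Y)=T$, so by Lemma \ref{lemma1}(1) $f$ is constant on $\hol{X}{Y}$ with value $T$, i.e.\ $\hol{X}{Y}\subset\Omega$. In particular, as soon as $\Omega$ contains two orthogonal elements $X,Y$ we get $\mathcal G_1(S)=\hol{X}{Y}\subset\Omega$ and $f$ is constant. So the lemma reduces to the following geometric claim: a subset of $\mathcal G_1(\mathbb C^2)$ containing a small circle and closed under $(X,Y)\mapsto\hol{X}{Y}$ must contain two orthogonal elements (here $S$ is identified with $\mathbb C^2$, and for distinct $X,Y$ the set $\hol{X}{Y}$ is the same whether computed in $H$ or inside $\mathcal G_1(X+Y)=\mathcal G_1(S)$).

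To prove this claim I would transport everything to $\mathbb S^2$ by $\Phi$, writing, as at the end of Section 3, $C(X,Y)=\Phi(\hol{X}{Y})$ for the small circle obtained by cutting $\mathbb S^2$ with the hyperplane through the midpoint $O$ of the segment $[\Phi(X),\Phi(Y)]$ normal to the position vector of $O$; thus $\Omega\subset\mathbb S^2$ contains a small circle and is closed under $(p,q)\mapsto C(p,q)$. Since this property is preserved by the orthogonal transformations of $\mathbb S^2$ (which correspond to unitary and anti-unitary operators by Lemma \ref{iso} and commute with $C(\cdot,\cdot)$), I may assume the small circle in $\Omega$ is a latitude circle $K_a:=\mathbb S^2\cap\{x_3=a\}$ with $a=h$ for some $h\in(0,1/2)$ (replacing $h$ by $-h$ if necessary). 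As $p,q$ range over $K_h$, the midpoint $O$ fills the open disc bounded by $K_h$ in its plane, and a direct computation shows $\bigcup_{p,q\in K_h}C(p,q)=\mathbb S^2\cap\{x_3\ge 2h-1/2\}$: indeed, a point of $\mathbb S^2$ at height $z$ and distance $r=\sqrt{1/4-z^2}$ from the $x_3$-axis lies on some $C(p,q)$ if and only if $|c+h(h-z)/c|\le r$ for some $c\in(0,\sqrt{1/4-h^2})$, and an elementary analysis of this inequality (an AM--GM estimate, with a case distinction on the sign of $h(h-z)$) gives exactly the condition $z\ge 2h-1/2$. Consequently $\Omega\supset\mathbb S^2\cap\{x_3\ge h_1\}$ with $h_1:=2h-1/2<h$, so $K_{h_1}\subset\Omega$. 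Iterating this (while the current height $h_n$ is positive, apply the same computation to $K_{h_n}\subset\Omega$ to obtain $\Omega\supset\mathbb S^2\cap\{x_3\ge h_{n+1}\}$ with $h_{n+1}=2h_n-1/2$), and using that $h_n=2^n(h-1/2)+1/2\to-\infty$, after finitely many steps $\Omega$ contains the cap $\mathbb S^2\cap\{x_3\ge h_n\}$ for some $h_n\le 0$, hence also the great circle $C_{1/2}=\mathbb S^2\cap\{x_3=0\}$. Two diametrically opposite points of $C_{1/2}$ correspond to orthogonal elements of $\mathcal G_1(\mathbb C^2)$ (the Example in Section 3), so $\Omega$ contains an orthogonal pair, as required; this proves the claim, and with it the lemma.

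The only genuinely computational step is the identity $\bigcup_{p,q\in K_h}C(p,q)=\mathbb S^2\cap\{x_3\ge 2h-1/2\}$, and the main (minor) difficulty there is carrying out the case analysis of the inequality $|c+h(h-z)/c|\le r$ cleanly. Everything else is an application of Lemma \ref{lemma1}(1), the elementary facts about $\hol[k]{X}{Y}$ from Section 3, and the bookkeeping of the cap iteration.
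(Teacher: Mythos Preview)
Your proof is correct and follows a genuinely different route from the paper's. Both arguments begin the same way---reduce to the non-orthogonal case, transport to $\mathbb{S}^2$, and exploit that the fibre $\Omega=f^{-1}(T)$ is closed under $(X,Y)\mapsto\hol{X}{Y}$---but then diverge. The paper invokes Lemma~\ref{contin} (continuity of $f$ on projective lines) and runs an open--closed argument on the index set $I=(0,1/2)\cup(1/2,1)$ of latitude circles: the set $J=\{t\in I:g(C_t)=a\}$ is nonempty, closed by continuity, and open by a local geometric construction with the circles $C(y)$; connectedness of the two halves of $I$ forces $J$ to be $(0,1/2)$, $(1/2,1)$, or all of $I$, and continuity once more yields $g(C_{1/2})=a$, producing an orthogonal pair in $\Omega$. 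You instead bypass continuity entirely: you compute explicitly that $\bigcup_{p,q\in K_h}C(p,q)=\mathbb{S}^2\cap\{x_3\ge 2h-1/2\}$ (indeed, the height range of $C(p,q)$ for midpoint at distance $c$ from the axis is $[h-\psi(c),h+\psi(c)]$ with $\psi(c)^2=c^2(1/4-c^2-h^2)/(c^2+h^2)$, whose maximum over $c$ is $(1/2-h)^2$), and then iterate $h_{n+1}=2h_n-1/2$ until the cap contains the equator. This is more computational but has the merit of not relying on Lemma~\ref{contin} at all; the paper's argument is shorter and more conceptual but leans on continuity twice. Your computation and the iteration are both sound.
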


\begin{proof}
Let $S\in{\mathcal G}_{2}(H)$.
Suppose that the restriction of $f$ to $\left\langle S \right]_1$ is not injective. Then there exist distinct $X,Y\in \left\langle S \right]_1$ such that $f(X) = f(Y)$. For every $t \in \mathbb R$ define
$$
g(t) = \det ((t(P_X+P_Y)+(1-2t)P_{S \cap X^\bot})|_S).
$$
Then $g \colon \mathbb R \to \mathbb R$ is a continuous function such that $g(\tfrac{1}{2}) > 0$. Let $A = P_X+P_Y-P_{S \cap X^\bot}$. Let $\langle \cdot, \cdot \rangle$ denote the scalar product on $H$. Since $\langle Ax,x \rangle > 0 $ for $x \in X$ and $\langle Az,z \rangle \le 0 $ for $z \in S \cap X^\bot$, we have $g(1) \le 0$. Therefore, there exists $t \in \left(\tfrac{1}{2},1\right]$ such that $g(t) = 0$. For such $t$ we have $S \cap X^\bot \in \hol[t]{X}{Y}$. By Lemma \ref{lemma1}, $f(S \cap X^\bot) = f(X) = f(Y)$. Another application of Lemma \ref{lemma1} yields that $f$ is constant on $\hol{X}{S \cap X^\bot}$, which equals $\left\langle S \right]_1$ by Lemma \ref{hole}.
\end{proof}

\begin{lemma}\label{contin}
	The restriction of $f$ to any projective line is continuous.
\end{lemma}

\begin{proof}
	Let $S\in{\mathcal G}_{2}(H)$. Then the linear span of $\{P_X \mid X \in \left\langle S \right]_1\}$ is finite-dimensional, which implies that the restriction of $L$ to this linear span is bounded. Hence,	the restriction of $f$ to $\left\langle S \right]_1$ is continuous.
\end{proof}

\begin{proof}[Proof of Theorem \ref{main}]
	The two examples in the conclusion of the theorem clearly satisfy \eqref{lpro}. Assume now that \eqref{lpro} holds.
	
	If $f$ is constant, then $\phi(A) = ({\rm tr} A) P_0$, $A \in \mathcal F_s \left(H\right)$, for some $P_0 \in \PH{1}$.
	
	Assume now that $f$ is not constant. Then there exist $X, Y \in \G{1}$ such that $f(X) \ne f(Y)$. Denote $S = X+Y \in \G{2}$. We will first show that
	\begin{equation}\label{gc}
		f (\left\langle S \right]_1) = \left\langle f(X)+f(Y) \right]_1.
	\end{equation}
	By Lemma \ref{lemma1}, Lemma \ref{const}, and Lemma \ref{contin}, $f$ is an injective continuous map from $\left\langle S \right]_1$ to $\left\langle f(X)+f(Y) \right]_1$, which are both homeomorphic to the $2$-dimensional sphere $\mathbb S^2$. Thus, $f$ induces an injective continuous map $\widetilde f \colon \mathbb S^2 \to \mathbb S^2$. If $\widetilde{f}$ was not surjective, then it would map into $\mathbb S^2 \setminus \{p\}$ for some $p \in \mathbb S^2$, which is homeomorphic to $\mathbb R^2$, but this would contradict the Borsuk–Ulam theorem. Therefore, we deduce \eqref{gc}.
	
	We next assert that
		\begin{equation}\label{wp}
			{\rm tr}(P_{f(X)} P_{f(Y)}) = {\rm tr}(P_X P_Y).
		\end{equation}
	Assume first that $X$ and $Y$ are orthogonal. Lemma \ref{hole} implies that $\hol{X}{Y} = \left\langle S \right]_1$, so it follows from \eqref{gc} that $\hol{f(X)}{f(Y)} = \left\langle f(X)+f(Y) \right]_1$. Another application of Lemma \ref{hole} yields that $f(X)$ and $f(Y)$ are orthogonal, as desired. Suppose now that $X$ and $Y$ are not orthogonal and denote $t = \tfrac{1}{1 + \sqrt{{\rm tr}(P_X P_Y)}} \in \left(\tfrac{1}{2},1\right)$. By Lemma \ref{hole}, $\hol[t]{X}{Y}$ is a singleton. We claim that
	\begin{equation}\label{ghol}
		f (\hol[t]{X}{Y}) = \hol[t]{f(X)}{f(Y)}.
	\end{equation}
	Indeed, the left-hand side is contained in the right-hand side by Lemma \ref{lemma1}. Let now $W \in \hol[t]{f(X)}{f(Y)}$. Then $W \in \left\langle f(X)+f(Y) \right]_1$ by Lemma \ref{hole}, so \eqref{gc} yields that $W = f(W')$ for some $W' \in \left\langle S \right]_1$. Hence,
	\begin{equation}\label{pxyw}
		t(P_{f(X)} + P_{f(Y)}) + (1-2t) P_{f(W')} = P_{W''}
	\end{equation}
	for some $W'' \in \G{1}$. Then we have $W'' \in \hol[\tfrac{t}{2t-1}]{f(X)}{f(Y)}$, hence another application of \eqref{gc} implies that $W'' = f(W''')$ for some $W''' \in \left\langle S \right]_1$.
	Denote
	$$
	A = t(P_X+P_Y) + (1-2t) P_{W'} - P_{W'''}.
	$$
	By \eqref{pxyw}, $L(A) = 0$. We assert that $A = 0$. Indeed, $A = a P_Z + b P_{S \cap Z^\bot}$ for some $Z \in \left\langle S \right]_1$ and $a, b \in \mathbb R$. Since $f|_{\left\langle S \right]_1}$ is injective, $f(Z) \ne f(S \cap Z^\bot)$, so $P_{f(Z)}$ and $P_{f(S \cap Z^\bot)}$ are linearly independent. Now $0 = L(A) = a P_{f(Z)} + b P_{f(S \cap Z^\bot)}$ implies that $a=b=0$ and $A = 0$, which completes the proof of \eqref{ghol}.
	
	By \eqref{ghol}, $\hol[t]{f(X)}{f(Y)}$ is a singleton. Another application of Lemma \ref{hole} yields that $t = \tfrac{1}{1 + \sqrt{{\rm tr}(P_{f(X)} P_{f(Y)})}}$, so \eqref{wp} holds.

We have shown that \eqref{wp} holds whenever $X, Y \in \G 1$ are such that $f(X) \ne f(Y)$. By Lemma \ref{const}, the same holds for any pair from $\left\langle X+Y \right]_1$.

We will next show $f$ is injective. If $\dim H = 2$, there is nothing more to do, so assume that $\dim H \ge 3$. Seeking a contradiction, suppose that there exist pairwise distinct $X, Y, Z \in \G 1$ such that $f(X) \ne f(Y)$ and $f(Z) = f(X)$. Denote $S = X+Y$ and let $Z' = (S+Z) \cap S^\bot$. By Lemma \ref{const}, $Z \not \subset S$, thus $Z' \in \G 1$. Let $Y' \in \left\langle S \right]_1 \setminus \{X\}$ be non-orthogonal to $X$. By the previous paragraph, $f(X)$ and $f(Y')$ are distinct and non-orthogonal. Since $Z'$ is orthogonal $Y'$, $f(Z')$ is either equal or orthogonal to $f(Y')$, so $f(Z') \ne f(X)$. Because $Z'$ is orthogonal to $X$, $f(Z')$ is orthogonal to $f(X)$, which equals $f(Z)$. By the previous paragraph, $Z'$ is orthogonal to $Z$. Hence,
$$
\{0\} = (S+Z) \cap (S+Z)^\bot = Z' \cap Z^\bot = Z',
$$
a contradiction. This contradiction shows that, since $f$ is not constant, it must be injective. Thus, \eqref{wp} holds for all $X, Y \in \G 1$. The conclusion of the theorem now follows from Wigner's theorem, see e.g. \cite{Geher1}.
\end{proof}

\section{Final remarks}
Consider a linear map $L$ on ${\mathcal F}_{s}(H)$
satisfying 
$$
L({\mathcal P}_{k}(H))\subset {\mathcal P}_{k}(H)
$$
for a certain $k \in \mathbb N$, $k < \dim H$. As above, $L$ induces a transformation $f$ of ${\mathcal G}_{k}(H)$ which is not necessarily injective.
The general case can be reduced to the case when $\dim H\ge 2k$. 

For subspaces $M$ and $N$ satisfying $\dim M<k<\dim N$ and $M\subset N$
we denote by $[M,N]_k$ the set of all $X \in \G k$ such that $M\subset X\subset N$.
For any $X,Y\in {\mathcal G}_{k}(H)$ we have 
$$\hol{X}{Y} = \left\{ Z \in \G k \, : \, P_X + P_Y - P_Z \in \PH k \right\} \subset [X\cap Y,X+Y]_{k}$$
and the inverse inclusion holds if and only if $X,Y$ are compatible, i.e.
there is an orthonormal basis of $H$ such that $X$ and $Y$ are spanned by subsets of this basis.
If $X$ and $Y$ are orthogonal, then $\hol{X}{Y}=\left\langle X+Y\right]_{k}$ and 
$$f(\left\langle X+Y\right]_{k})\subset \hol{f(X)}{f(Y)}\subset \left\langle f(X)+f(Y)\right]_{k}.$$
As in the proof of Lemma \ref{contin}, we show that for any $(2k)$-dimen\-sional subspace $S\subset H$
the restriction of $f$ to $\left\langle S\right]_k$ is continuous. 
In the case when $k=1$,  the restriction of $f$ to any projective line is a continuous map to a projective line.

In the general case, a line of ${\mathcal G}_{k}(H)$ is a subset of type $[M,N]_{k}$,
where $M$ is a $(k-1)$-dimensional subspace contained a $(k+1)$-dimensional subspace $N$.
This line can be identified with the line of $\left\langle M^{\perp} \right]_1$ associated to the $2$-dimensional subspace $N\cap M^{\perp}$.
Two distinct $k$-dimensional subspaces are contained in a common line if and only if they are adjacent, i.e. 
their intersection is $(k-1)$-dimensional. 
If $X,Y\in {\mathcal G}_{k}(H)$ are adjacent, then the line containing them is $[X\cap Y, X+Y]_{k}$.
It was noted above that this line coincides with $\hol{X}{Y}$ only in the case when $X$ and $Y$ are compatible.
If $X$ and $Y$ are non-compatible, then $\hol{X}{Y}$ is a  subset of the line $[X\cap Y, X+Y]_{k}$ homeomorphic to a circle. 

For every line there is a $(2k)$-dimensional subspace $S$ such that $\left\langle S\right]_k$ contains this line,
i.e. the restriction of $f$ to each line is continuous. 
Using analogous arguments as in the proof of Lemma \ref{const}, 
we establish that the restriction of $f$ to every line is either injective or constant;
but we are not be able to show that $f$ sends lines to subsets of lines. 

On the other hand, if $f$ is injective, then it is adjacency and orthogonality preserving 
(see \cite{ACh, Geher, Pankov3} for the details).
By \cite{Pankov2}, this immediately  implies that $f$ is induced by a linear or conjugate-linear isometry if $\dim H>2k$
and there is one other option for $f$ if $\dim H=2k$.

\end{document}